\newcommand{\Run}{\operatorname{Run}}
\newcommand{\N}{\mathbb{N}}
\begin{document}

\theoremstyle{definition}
\newtheorem{theorem}{Theorem}
\newtheorem{definition}[theorem]{Definition}
\newtheorem{problem}[theorem]{Problem}
\newtheorem{assumption}[theorem]{Assumption}
\newtheorem{corollary}[theorem]{Corollary}
\newtheorem{proposition}[theorem]{Proposition}
\newtheorem{example}[theorem]{Example}
\newtheorem{lemma}[theorem]{Lemma}
\newtheorem{observation}[theorem]{Observation}
\newtheorem{fact}[theorem]{Fact}
\newtheorem{question}[theorem]{Open Question}
\newtheorem{conjecture}[theorem]{Conjecture}
\newtheorem{addendum}[theorem]{Addendum}
\newcommand{\uint}{{[0, 1]}}
\newcommand{\Cantor}{{\{0,1\}^\mathbb{N}}}
\newcommand{\name}[1]{\textsc{#1}}
\newcommand{\me}{\name{P.}}
\newcommand{\id}{\textrm{id}}
\newcommand{\dom}{\operatorname{dom}}
\newcommand{\Dom}{\operatorname{Dom}}
\newcommand{\codom}{\operatorname{CDom}}
\newcommand{\spec}{\operatorname{spec}}
\newcommand{\opti}{\operatorname{Opti}}
\newcommand{\optis}{\operatorname{Opti}_s}
\newcommand{\Baire}{\mathbb{N}^\mathbb{N}}
\newcommand{\hide}[1]{}
\newcommand{\mto}{\rightrightarrows}
\newcommand{\Sierp}{Sierpi\'nski }
\newcommand{\BC}{\mathcal{B}}
\newcommand{\C}{\textrm{C}}
\newcommand{\lpo}{\textrm{LPO}}
\newcommand{\llpo}{\textrm{LLPO}}
\newcommand{\bwt}{\textrm{BWT}}
\newcommand{\leqW}{\leq_{\textrm{W}}}
\newcommand{\leW}{<_{\textrm{W}}}
\newcommand{\equivW}{\equiv_{\textrm{W}}}
\newcommand{\equivT}{\equiv_{\textrm{T}}}
\newcommand{\geqW}{\geq_{\textrm{W}}}
\newcommand{\pipeW}{|_{\textrm{W}}}
\newcommand{\nleqW}{\nleq_\textrm{W}}
\newcommand{\leqsW}{\leq_{\textrm{sW}}}
\newcommand{\equivsW}{\equiv_{\textrm{sW}}}
\newcommand{\Sort}{\operatorname{Sort}}

\newcommand{\pitc}{\Pi^0_2\textrm{C}}

\title{Parameterized Games and Parameterized Automata}

\author{
Arno Pauly
\institute{Department of Computer Science\\Swansea University\\ Swansea, UK\\}
\email{Arno.M.Pauly@gmail.com}
}

\def\titlerunning{Parameterized Games and Parameterized Automata}
\def\authorrunning{A. Pauly}
\maketitle

\begin{abstract}
We introduce a way to parameterize automata and games on finite graphs with natural numbers. The parameters are accessed essentially by allowing counting down from the parameter value to $0$ and branching depending on whether $0$ has been reached. The main technical result is that in games, a player can win for some values of the parameters at all, if she can win for some values below an exponential bound. For many winning conditions, this implies decidability of any statements about a player being able to win with arbitrary quantification over the parameter values.

While the result seems broadly applicable, a specific motivation comes from the study of chains of strategies in games. Chains of games were recently suggested as a means to define a rationality notion based on dominance that works well with quantitative games by Bassett, Jecker, P., Raskin and Van den Boogard. From the main result of this paper, we obtain generalizations of their decidability results with much simpler proofs.

As both a core technical notion in the proof of the main result, and as a notion of potential independent interest, we look at boolean functions defined via graph game forms. Graph game forms have properties akin to monotone circuits, albeit are more concise. We raise some open questions regarding how concise they are exactly, which have a flavour similar to circuit complexity. Answers to these questions could improve the bounds in the main theorem.
\end{abstract}

\section{Introduction}
The study of various kinds of $(\omega)$-automata and of games played on finite graphs tend to go hand in hand: In one direction, universality of a non-deterministic automaton can be reduced to asking about a winning strategy by Player 1 in a game where Player 2 controls the input, and Player 1 controls the non-determinism. In the other direction, we can view any $(\omega)$-automaton $\mathcal{M}$ as deciding the winning condition of a class of games (with varying arenas).

Our goal here is to study games and automata that are enhanced by additional natural number parameters. We will suggest a means for how these parameters are accessed that is sufficiently powerful to express meaningful concepts, yet keeps the usual algorithmic questions decidable no matter how we quantify over the parameters. This approach should be broadly applicable in many areas using games and automata.

The immediate motivation comes from \cite{ba-arxiv}, where parameterized automata were used to define chains of strategies with respect to dominance. Consider the game depicted in Figure \ref{fig:best_animal}. The type of behaviour we wish to describe is \emph{Repeat the $v_0v_1$ loop $k$ times (unless $\ell_2$ is reached), then move to $\ell_1$}. For any concrete choice of $k$, this is suboptimal, for choosing larger $k$ would be an improvement. However, it is not better to repeat the loop forever, since that risks getting payoff $0$. We thus consider this as a single instance of strategic behaviour, parameterized by $k$.

\begin{figure}[h]
\centering
    \begin{tikzpicture}[->,>=stealth',shorten >=1pt, initial text={}]
      \node [initial above ,state] (q1)                      {$v_0$};
      \node[state,rectangle]          (q2) [right=of q1]         {$v_1$};
      \node[state]          (q3) [right=of q2]         {$\ell_2$};
     \node[state]          (q4) [left=of q1]         {$\ell_1$};
      \path (q1) edge [bend left] node {$$} (q2);
      \path (q2) edge [bend left] node {$$} (q1);
      \path (q2) edge node {$$} (q3);
       %\path (q3) edge node {$(-,Yes)$} (q4);
       %\path (q3) edge [bend left] node {$(-,No)$} (q1);
      \path (q1) edge node {$$} (q4);
       \path (q3) edge [loop above] node {$$} (q3);
            \path (q4) edge [loop above] node {$$} (q4);
    \end{tikzpicture}
    \caption{The \emph{Help-me?}-game from \cite[Example 1]{ba-arxiv}. The protagonist owns the circle vertices.
The payoffs are defined as follows: $p((v_0v_1)^{\omega}) = 0$, $p((v_0v_1)^n v_0 \ell_1^{\omega}) = 1$ for $n \in \N$ and $p((v_0v_1)^n \ell_2^{\omega}) = 2$ for $n \in \N$. The protagonist seeks to maximize payoff.}
\label{fig:best_animal}
\end{figure}
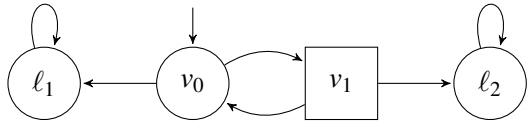

To be able to reason about such parameterized strategies in an algorithmic way, we will lift the usual decidability-aspects of finite automata or even $\omega$-automata to parameterized automata. We do this by showing that any possible behaviour for some parameter value in a parameterized game is already exhibited for bounded parameter values (Theorem \ref{theo:main}), entirely independent of the type of prefix-independent winning condition used. We will detail the application to chains of strategies in Section \ref{sec:application-chains}. The principles how this can be applied will be laid out in Section \ref{sec:application-generic}.

Along the way to proving the main result, we end up exploring the notion of functions definable by games in Section \ref{sec:ggf}. This can be seen as a more concise expression of monotone boolean functions than provided by monotone circuits. Obtaining a better understanding of how expressive this notion is precisely would provide either improved bounds in our main theorem, or lower bounds ruling them out. It seems very plausible that adapting and extending techniques from circuit complexity could be crucial here.

\subsection{Potentially related work}
A similar theme to our work is found in the study of regular cost functions. There, too, an extension of the usual automata and logical framework is undertaking in order to deal with bounds on certain quantities. Besides the shared theme, there does not seem to be significant overlap. An introduction to regular cost functions is found in \cite{colcombet}.

Parameterized games have some (superficial?) similarity to multi-dimensional energy games (e.g.~\cite{chatterjee,jurdzinski2}), with the restriction that all weights are non-positive. This restriction of energy games, however, is not particularly interesting (unless combined with other features such as in consumption games \cite{novotny}). The main difference is that in an energy game, depleted energy levels would force Player 1 to chose the non-costly edge, if she controls the relevant vertex, but she would be allowed to take the same edge at any time. In a parameterized game, no player decides on how to act in a counter access vertex, it is completely determined by the current counter value. Apart from simple examples, this makes their analysis very different.

\section{Introducing parameterized games}

\begin{definition}
A $N$-\emph{parameterized arena} is a directed graph $G = (V,E)$ together with a partition $V = V_1 \cup V_2 \cup V_c$, and a further partition $V_c = V_c^1 \cup \ldots \cup \ldots V_c^N$, where each vertex $v \in V_c$ has two distinguishable outgoing edges (a \emph{red} and a \emph{green} edge), and each $v \in V_1 \cup V_2$ has at least one outgoing edge. A vertex has either no self-loop, or a self-loop and otherwise no outgoing edges (those vertices are called \emph{leaves}).
\end{definition}

Arenas serve as the setting for games, where a token is moved along the edges of the graph. There are two players, one who decides which outgoing edge to take at vertices $v \in V_1$, and one who decides at $v \in V_2$. Our new addition are the vertices $v \in V_c$. The behaviour there is controlled via $N$ counters. The counters are initialized to some values $(n^1,\ldots,n^N) \in \mathbb{N}$. Whenever the token reaches some $v \in V_c^j$, we check whether the $j$-th counter value is $0$. If yes, the token moves along the red edge. If no, we decrement the $j$-th counter and take the green edge.

Formally, let a \emph{strategy} be a function $\sigma : V^* \to V$ satisfying that if $\sigma(hv) = u$, then $(v,u) \in E$. Given a strategy $\sigma_i$ for each player, a starting vertex $v_0$ and an initial counter value $n_0^j$ for each $j \leq N$, we inductively define the \emph{induced run} $\Run(\sigma_1,\sigma_2,v_0,(n_0^1,\ldots,n_0^N)) = v_0v_1v_2\ldots$ and the counter value updates $n_0^j, n_1^j, \ldots$ in stages $k \in \mathbb{N}$ as follows: If $v_k \in V_i$, then $n_{k+1}^j = n_k^j$ and $v_{k+1} = \sigma_i(v_0\ldots v_k)$. If $v_k \in V_c^j$ and $n_k^j = 0$, then $v_{k+1} = u$ where $u$ is the vertex reached by following the red edge from $v_k$, and $n_{k+1}^j = 0$, as well as $n_{k+1}^{j'} = n_{k}^{j'}$ for $j' \neq j$. If $v_k \in V_c^j$ and $n_k^j > 0$, then $v_{k+1} = u'$ where $u'$ is reached following the green edge from $v_k$, and $n_{k+1}^j = n_k^j - 1$, as well as $n_{k+1}^{j'} = n_{k}^{j'}$ for $j' \neq j$.

\begin{definition}
A \emph{parameterized game} is a parameterized arena together with a winning condition $C \subseteq V^\omega$. We only consider prefix-independent winning conditions here; these satisfy that $\forall \rho \in V^\omega, \ \forall h \in V^* \ \ \rho \in C \Leftrightarrow h\rho \in C$.
\end{definition}

\begin{definition}
A strategy $\sigma_1$ is \emph{winning} a parameterized game for Player 1 from $v_0$ with parameters $n_0^1, \ldots, n^N$, if for all strategies $\sigma_2$ it holds that $\Run(\sigma_1,\sigma_2,v_0,(n_0^1,\ldots,n_0^N)) \in C$.
\end{definition}

Of course, we have the dual notion of a strategy being winning for Player 2. Whenever $C \subseteq V^\omega$ is a Borel set, then Borel determinacy \cite{martin} implies that either Player 1 or Player 2 has a winning strategy in a parameterized game for each choice of parameters $n_0^1, \ldots, n^N$. For sufficiently complex winning conditions, this property no longer holds -- being prefix-independent does not help (cf.~Proposition \ref{prop:undetermined} in the appendix).

Considering parameterized games for particular choices of parameters is not particularly interesting:

\begin{observation}
Given a parameterized game $G$ of size $S$ and parameter values $n_0^1,\ldots,n_0^N$, we can unfold it to an ordinary game $G'$ of size $S\prod_{i \leq N}n_0^i$ such that Player $1$ can win $G$ with parameter values $n_0^1,\ldots,n_0^N$ iff she can win $G'$.
\end{observation}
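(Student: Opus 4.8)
The plan is to carry out the obvious ``unfolding'' (product) construction. Write $\bar n = (n_0^1,\dots,n_0^N)$ for the chosen parameters. In $G'$ a vertex records, alongside the current vertex of $G$, the current value of every counter, and each counter-access vertex of $G$ becomes an ordinary vertex of $G'$ carrying a single forced outgoing edge. Concretely: put $V' = V \times \prod_{j\le N}\{0,1,\dots,n_0^j\}$ (optionally restricted to the configurations reachable from $(v_0,\bar n)$); declare $(v,\bar m)\in V_1'$ iff $v\in V_1$ and $(v,\bar m)\in V_2'$ otherwise (the ownership of the former counter-access vertices is irrelevant, as they carry no choice); and add edges as follows: for $v\in V_1\cup V_2$ and $(v,u)\in E$, add $((v,\bar m),(u,\bar m))$ for every $\bar m$; for $v\in V_c^j$ with $m^j=0$, add $((v,\bar m),(u,\bar m))$ where $u$ is the red successor of $v$; for $v\in V_c^j$ with $m^j>0$, add $((v,\bar m),(u,\bar m'))$ where $u$ is the green successor of $v$ and $\bar m'$ agrees with $\bar m$ except in coordinate $j$, where it equals $m^j-1$; and a leaf $v$ of $G$ gives rise to leaves $(v,\bar m)$. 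Finally set $C' = \{\rho'\in (V')^\omega : \pi(\rho')\in C\}$, where $\pi\colon (V')^\omega\to V^\omega$ is the coordinate projection; prefix-independence of $C$ transfers to $C'$ because $\pi$ commutes with shifts. The size bound is then immediate: $|V'|\le S\cdot\prod_{i\le N}(n_0^i+1)$, which is $S\cdot\prod_{i\le N}n_0^i$ up to the routine collapsing of the forced counter-access vertices (and assuming each $n_0^i\ge 1$).

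The heart of the argument is a bijection between plays, and hence between strategies. By induction on length, every history $v_0v_1\cdots v_k$ of $G$ that starts at $v_0$ and is consistent with the counter dynamics from $\bar n$ determines a unique sequence of counter vectors $\bar n = \bar m_0,\bar m_1,\dots,\bar m_k$; hence $v_0\cdots v_k\mapsto (v_0,\bar m_0)\cdots(v_k,\bar m_k)$ is a bijection between the histories of $G$ reachable from $v_0$ with parameters $\bar n$ and the histories of $G'$ reachable from $(v_0,\bar n)$, with $\pi$ as its inverse on these sequences. Conjugating by this bijection turns a strategy $\sigma_i\colon V^*\to V$ into a strategy $\sigma_i'\colon (V')^*\to V'$ and conversely, the values on unreachable histories being irrelevant and chosen arbitrarily. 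One then checks straight from the two definitions of $\Run$ that $\pi\big(\Run(\sigma_1',\sigma_2',(v_0,\bar n))\big) = \Run(\sigma_1,\sigma_2,v_0,\bar n)$, so by the choice of $C'$ the former run lies in $C'$ iff the latter lies in $C$. Thus $\sigma_1$ wins $G$ from $v_0$ with parameters $\bar n$ iff $\sigma_1'$ wins $G'$ from $(v_0,\bar n)$, and quantifying over the opponent's strategy gives the claimed equivalence.

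There is no genuine obstacle here -- this is a bookkeeping observation -- but the point that deserves care is exactly this history/strategy correspondence: since a strategy is formally a function on all of $V^*$, one must be explicit that only reachable histories affect the outcome and that on those the counter component is a determined function of the $G$-history, which is what makes the conjugation well defined in both directions. The only other mild annoyance is matching the size bound on the nose rather than up to the harmless additive one in each factor.
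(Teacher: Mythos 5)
Your construction is correct and is exactly the unfolding the paper has in mind; the paper states this as an unproved observation, and your product construction (recording counter vectors in the state, forcing the former counter-access vertices, pulling back $C$ along the projection, and conjugating strategies via the history bijection) is the standard argument it tacitly relies on. Your remark that the honest vertex count is $S\prod_{i\le N}(n_0^i+1)$ rather than $S\prod_{i\le N}n_0^i$ is a fair nitpick of the paper's informal bound, not a gap in your proof.
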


Instead, we are interested in how whether Player 1 has a winning strategy in a fixed game varies with the choice of parameters $(n_0^1,\ldots,n_0^N)$. Our main theorem shows that in order to explore the possible behaviour for all parameter values, it it suffices to check a finite number of cases, namely:

\begin{theorem}
\label{theo:main}
Fix a parameterized game $((V_0 \cup V_1 \cup V_c, E),C)$ and starting vertex $v_0$. Player 1 has a winning strategy from $v_0$ for some parameter values $(n_0^1,\ldots,n_0^N)$ iff there exists such $(n_0^1,\ldots,n_0^N)$ with $n_0^j \leq 2^{|V_c^j|}$.
\end{theorem}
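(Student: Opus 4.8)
The plan is to track, for each counter $j$ separately, how Player~1's winning region changes as the value of counter $j$ grows, to show that this change is governed by a monotone operator acting effectively on a set of size at most $2^{|V_c^j|}$, and to conclude from eventual periodicity of its iterates. The $N$ counters are then handled one at a time.

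I would first do the single‑counter case ($N=1$, $V_c=V_c^1$, $m=|V_c|$, with $g(q)$ and $r(q)$ the green and red successors of $q\in V_c$). For $k\in\N$ let $W_k\subseteq V$ be the set of vertices from which Player~1 wins with counter value $k$. The key claim is $W_{k+1}=\Phi(W_k)$ for a fixed operator $\Phi$ on $\mathcal{P}(V)$: $\Phi(X)$ is the set of vertices from which Player~1 wins the ``cut game'' obtained from $G$ by playing until $V_c$ is first reached, where reaching $q\in V_c$ is declared an immediate win iff $g(q)\in X$ (and an immediate loss otherwise), while plays avoiding $V_c$ forever are judged by the original prefix‑independent condition $C$. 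This holds because a positive counter forces the green edge together with one decrement; and $W_0$ is just the winning region of the ``always red'' game. Two facts then finish this case: (i) $\Phi$ is monotone, and (ii) $\Phi(X)$ depends on $X$ only through $X\cap g(V_c)$, a subset of a set of size at most $m$. Hence $b_k:=W_k\cap g(V_c)$ lives in a set of size at most $2^m$ and satisfies $b_{k+1}=\psi(b_k)$ for a fixed $\psi$, so the sequence $(b_k)$ is eventually periodic with pre‑period plus period at most $2^m$; since $W_{k+1}$ is determined by $b_k$, so is the sequence $(W_k)_{k\ge 1}$, and hence the truth value ``$v_0\in W_k$''. Therefore, if Player~1 wins for some $k$, she wins for some $k\le 2^m=2^{|V_c|}$.

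For general $N$ the plan is to reduce counters one at a time: with $n_0^{j'}$ fixed for $j'\neq j$, run the argument above on counter $j$ to replace $n_0^j$ by some value $\le 2^{|V_c^j|}$, then pass to the next counter. The main obstacle is that this forces the single‑counter analysis to be carried out \emph{in context}, i.e.\ in an arena that still contains the other counters, with a bound depending only on $|V_c^j|$ and not on the rest of the arena (in particular not on the already‑chosen $n_0^{j'}$). Done naively, the operator $\Phi$ for counter $j$ now acts on configurations recording the other counters' values too, and its ``effective input'' set inflates from $|V_c^j|$ to $|V_c^j|\cdot\prod_{j'\neq j}(n_0^{j'}+1)$; iterating such reductions would blow the bound into a tower of exponentials. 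Overcoming this is exactly what the graph game form machinery of Section~\ref{sec:ggf} is for: one shows that, however complicated the surrounding arena is, the influence of counter $j$ on winnability is captured by a graph game form with only $|V_c^j|$ boolean inputs — a concise, monotone‑circuit‑like object — so that the associated subset construction still lives in a space of size $2^{|V_c^j|}$, restoring the uniform bound. I expect this context‑independence to be the crux of the whole proof.

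Two routine remarks close the argument: nothing above invokes determinacy, since it only transforms Player~1 winning strategies, so the statement holds for arbitrary prefix‑independent $C$; and the one‑at‑a‑time reduction is legitimate because each step only lowers a single parameter while keeping Player~1 winning, so after $N$ steps all parameters lie below their bounds.
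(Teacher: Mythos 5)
Your single\hyp{}counter case is correct and is essentially the paper's own argument in different packaging: your cut-game operator $\Phi$ is precisely the function defined by the induced graph game form, the remark that only the bits at the green successors matter is the restriction to $\{0,1\}^m$, and your eventual-periodicity bound is the repetition-number bound of Observation \ref{obs:rntrivialbound} used in Corollary \ref{corr:boundingvalue} (monotonicity contributes nothing to the $2^m$ bound in either version). The gap is the multi-counter step, which you yourself single out as ``the crux'' and then do not carry out: you assert that ``the influence of counter $j$ on winnability is captured by a graph game form with only $|V_c^j|$ boolean inputs'', uniformly in the fixed values of the other counters, but nothing in your proposal establishes this, and it is not a consequence of the mere definition of graph game forms. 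In the paper this is exactly the content of Lemma \ref{lemma:connection}; appealing to Section \ref{sec:ggf} is not a substitute for proving such a lemma.

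Moreover the step is genuinely delicate, because the in-context recurrence $W_{k+1}=\Phi(W_k)$ that you propose to reuse is false once a second counter is present. Take $V_c^1=\{c\}$, $V_c^2=\{d,e\}$, no player-controlled choices, with $d$: green to $c$, red to a losing leaf; $c$: green to $e$, red to a losing leaf; $e$: green to $c$, red to a winning leaf; fix $n_0^1=1$ and let $W_k$ be the winning region when counter $2$ has value $k$. Then $c\in W_0$ but $c\notin W_1$ (starting at $c$ with values $(1,1)$, counter $1$ is consumed before $e$ sends the token back), whereas your cut operator for counter $2$, which upon first reaching $e$ consults the bit of $g(e)=c$ in $W_0$ --- i.e.\ winnability with counter $1$ restored to its initial value --- puts $c$ into $\Phi(W_0)$. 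Equivalently, Player 1 wins from both $d$ and $e$ with values $(1,1)$ and loses from both with $(1,2)$, while the induced-game-form function with counter $1$ instantiated at $1$ maps the all-ones vector of $\{0,1\}^{|V_c^2|}$ to itself. The cause is that when the play reaches the next counter-$2$ access, the other counters have already been partially decremented, so the bit being consulted (winnability with those counters reset to their initial values) is the wrong one; this is exactly the point that the composition ``follow $\sigma$ until $\ell_i$ is reached, then switch to $\sigma_i$'' in Lemma \ref{lemma:connection} must confront, and any complete proof has to account for the residual values of the remaining counters (or reorganize the induction so that the question does not arise). As written, the reduction from $N$ counters to one --- the heart of Theorem \ref{theo:main} --- is missing from your proposal.
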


Our main theorem immediately implies that if the existence of a winning strategy for some class of games is decidable, then it is still decidable for its parameterized version -- albeit potentially at the cost of an exponential blowup of the computation time.

\begin{example}
\label{ex:linearlowerbound}
There are single-player parameterized reachability games with a single parameter of size $N + 2$, such that the player wins iff the parameter is at least $N$. Simple construct a line of counter access states linked by green edges, with the single winning leaf at the end. The red edges all lead to a losing leaf. See Figure \ref{fig:linearlowerbound} for an example.
\end{example}

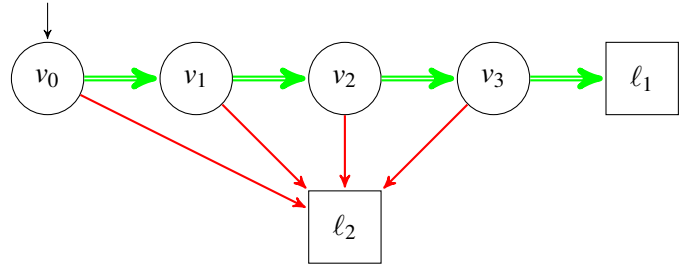
\begin{figure}[h!]
\centering
    \begin{tikzpicture}[->,>=stealth',shorten >=1pt, initial text={}]
      \node [initial above ,state] (q1)                      {$v_0$};
      \node[state]          (q2) [right=of q1]         {$v_1$};
      \node[state]          (q3) [right=of q2]         {$v_2$};
     \node[state]          (q4) [right=of q3]         {$v_3$};
          \node[state,rectangle]          (l1) [right=of q4]         {$\ell_1$};
           \node[state,rectangle]          (l2) [below=of q3]         {$\ell_2$};
      \path (q1) edge [green, double, thick] node {$ $} (q2);
      \path (q2) edge [green, double, thick] node {$$} (q3);
      \path (q3) edge [green, double, thick] node {$$} (q4);
        \path (q4) edge [green, double, thick] node {$$} (l1);
       %\path (q3) edge node {$(-,Yes)$} (q4);
       %\path (q3) edge [bend left] node {$(-,No)$} (q1);
             \path (q1) edge [red, thick] node {$$} (l2);
      \path (q2) edge [red, thick] node {$$} (l2);
            \path (q3) edge [red, thick] node {$$} (l2);
                  \path (q4) edge [red, thick] node {$$} (l2);
    \end{tikzpicture}
    \caption{To Example \ref{ex:linearlowerbound}. A single-player parameterized reachability game with a single parameter: The leaf $\ell_1$ is winning for the player, the leaf $\ell_2$ is losing. The player never gets to chose, and wins iff the parameter at the beginning is $4$ or greater.}

    \label{fig:linearlowerbound}
\end{figure}

As a variant\footnote{Which was suggested to the author by one of the anonymous referees.}, we could turn parameterized games (which are families of games) into a single game model satisfying the same purposes. For that, rather than working with fixed-but-unspecified parameters, we add initial rounds where the players get to chose their desired values of parameters (with some assignment of which player gets to pick what parameters). Theorem \ref{theo:main} then shows that the infinitary rounds at the start can be replaced by finite choices without altering who is winning.

\section{Graph game form definable functions}
\label{sec:ggf}
We proceed to introduce the main technical notion for improving the bounds in our main theorem. We consider functions of type $f : \{0,1\}^m \to \{0,1\}^n$ that can be defined in particular way via games. Besides having the key role in the main proof, this notion might also be of independent interest. Defining functions by games can be seen as a generalization or extension of the very familiar concept of defining them via circuits.%Note that we just speak of an arena respectively a game, if we are dealing with a parameterized arena respectively game with $V_c = \emptyset$.

\begin{definition}
\label{def:graphgameform}
An $(m,n)$-\emph{graph game form}\footnote{The concept and name are inspired by the notion of a game form as related to normal form games. The latter was introduced in \cite{gibbard}, and has since seen significant attention in game theory.} is a parameterized game $(G,C)$ with $m$ designated leaves $(\ell_1,\ldots,\ell_m)$ and $n$ designated vertices $(s_1,\ldots,s_n)$; instantiated with parameter values. It defines a function $f : \{0,1\}^m \to \{0,1\}^n$ as follows: Write $a_1,\ldots,a_n = f(b_0,\ldots,b_m)$. Consider the game with the modified winning condition $C' = C \setminus \{hl_i^\omega \mid h \in V^*, b_i = 0\} \cup \{hl_i^\omega \mid h \in V^*, b_i = 1\}$. If Player 1 has a winning strategy in this game starting from $s_j$, then $a_j = 1$, otherwise $a_j = 0$.
\end{definition}

In other words, the input to $f$ tells us what happens in the game if the run ever reaches one of the designated leaves -- either Player 1 wins there or not. For those runs never reaching a designated leaf, we stick with the original winning condition $C$ to determine the winner. We then vary the starting vertices amongst the designated choices $s_1,\ldots,s_n$ to determine the output bits.

%We will be interested in what functions can be defined via graph game forms using particular types of winning conditions and graphs of particular sizes. To illuminate the notion, we proceed to characterize the functions definable via reachability games.

We will proceed to see that the class of functions definable via graph game forms is a very familiar class of boolean functions, namely the monotone functions (see \cite{korshunov} for a survey). A function $f : \{0,1\}^m \to \{0,1\}^n$ is called \emph{monotone}, if $u \leq w$ implies $f(u) \leq f(w)$. Here, we lift $\leq$ from $\{0,1\}$ to $\{0,1\}^m$ componentwise.

Recall that a monotone circuit is an acyclic directed graph where vertices that are not sinks are labeled by $\wedge$ or $\vee$. If it has $m$ sinks and $n$ sources, it computes a function $f : \{0,1\}^m \to \{0,1\}^n$ by assigning \emph{true} or \emph{false} to the sinks based on the input bits, assigning a $\wedge$-labeled vertex to \emph{true} if all its successors are assigned \emph{true}, and \emph{false} if one successor is assigned \emph{false}, and dually for $\vee$-labeled vertices. The output bits are obtained by considering the values assigned to the sources. In addition, some output bits could be fixed to be either $0$ or $1$; and some input bits could be entirely ignored.

It is a classic observation that the monotone boolean functions are exactly those computed by monotone circuits. Straight-forward induction shows that being computed by a monotone circuit implies being monotone. For the other direction, we note that every boolean function $f$ can be expressed bitwise as a reduced disjunctive normal form. If this contains a negation, we can extract a counterexample to $f$ being monotone. If it does not, we can directly transform the disjunctive normal form into a monotone circuit.

\begin{theorem}
\label{theo:monotonefunctions}
The following are equivalent for boolean functions $f : \{0,1\}^m \to \{0,1\}^n$:
 \begin{enumerate}
 \item $f$ is monotone.
 \item $f$ is definable via a reachability graph game form.
 \item $f$ is definable via some graph game form.
 \end{enumerate}
\begin{proof}
\hfill
\begin{description}
\item[$1. \Rightarrow 2.$]
We use the characterization of monotone functions as being computed by monotone circuits. It is straight-forward to conceive of a monotone circuit as a graph game form with some arbitrary winning condition. We let Player 1 control the $\vee$-labeled vertices, and Player 2 the $\wedge$-labeled vertices. Since a circuit is acyclic, any path eventually reaches one of the designated leaves, and thus the original winning condition has no impact at all.

\item[$2. \Rightarrow 3.$] Trivial.

\item[$3. \Rightarrow 1.$] Let $f$ be defined by a graph game form. To show that $f$ is monotone, it suffices to show that if the $j$-th bit of $f(w)$ is $1$, and $w \leq u$, then the $j$-th bit of $f(u)$ is $1$, too. Looking into Definition \ref{def:graphgameform}, we see that this just means that if Player $1$ can win from $s_j$ for some assignment of winning and losing leaves, then changing some losing leaves to winning does not change this. This is clear, because whatever winning strategy $\sigma$ Player 1 has in the original configuration will prevent any of the originally losing leaves being reached anyway. Thus, it keeps winning after the modification.
\end{description}
\end{proof}
\end{theorem}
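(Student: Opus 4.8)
The plan is to prove the three-way equivalence by the cycle $1 \Rightarrow 2 \Rightarrow 3 \Rightarrow 1$, using the classical fact recalled just above that a boolean function is monotone exactly when it is computed by a monotone circuit.

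For $1 \Rightarrow 2$ I would take a monotone circuit for $f$ and reread it as a reachability graph game form essentially verbatim. The $m$ sinks (input literals) become the designated leaves $\ell_1,\dots,\ell_m$, each fitted with the mandatory self-loop; the $n$ sources become the designated vertices $s_1,\dots,s_n$; $\vee$-gates go into $V_1$ and $\wedge$-gates into $V_2$; and $V_c = \emptyset$, so the counter is inert. Orienting the circuit edges from sources toward sinks, moving a token from $s_j$ is then precisely the evaluation game for monotone circuits: at a $\vee$-gate Player 1 picks a child she claims is true, at a $\wedge$-gate Player 2 picks a child he claims is false, and every play eventually loops in some $\ell_i$ because the circuit is acyclic. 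Hence the ambient condition $C$ never matters and may be fixed (say to $\emptyset$) before any input is read; the modified condition $C'$ then makes $h\ell_i^\omega$ winning iff $b_i = 1$. A routine induction on the circuit --- the soundness and completeness of game semantics for monotone evaluation --- shows Player 1 wins from $s_j$ iff the $j$-th output bit is $1$. (Ignored input bits correspond to leaves unreachable from every $s_j$, and fixed output bits are obtained by routing the relevant $s_j$ to a non-designated leaf whose winning status is baked into $C$.) The step $2 \Rightarrow 3$ is immediate, a reachability graph game form being in particular a graph game form.

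For $3 \Rightarrow 1$, let $f$ be defined by a graph game form $(G,C)$ with designated leaves $(\ell_i)$, designated vertices $(s_j)$ and a fixed parameter instantiation. It suffices to show that if the $j$-th bit of $f(w)$ is $1$ and $w \leq u$ componentwise, then the $j$-th bit of $f(u)$ is $1$. By Definition~\ref{def:graphgameform}, $f(w)_j = 1$ means Player 1 has a winning strategy $\sigma$ from $s_j$ under the condition $C'_w$ in which $h\ell_i^\omega$ is winning iff $w_i = 1$ and every other play is judged by $C$; moving from $w$ to $u$ only promotes some losing leaves to winning. I would check that the same $\sigma$ still wins under $C'_u$: for any play $\rho$ consistent with $\sigma$, if $\rho$ avoids all designated leaves it is judged by $C$ as before and won by Player 1, while if $\rho = h\ell_i^\omega$ then, since $\sigma$ wins under $C'_w$, necessarily $w_i = 1$, hence $u_i = 1$, hence $\rho \in C'_u$. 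Either way Player 1 wins, so $f(u)_j = 1$ and $f$ is monotone.

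The only genuinely delicate point is $1 \Rightarrow 2$: one must make the circuit-to-game-form dictionary precise (edge orientation, gate fan-in, the harmlessness of the forced self-loops at leaves) and must be explicit that $C$ is committed to before the input while acyclicity renders this commitment costless. The remaining steps are routine once Definition~\ref{def:graphgameform} is unfolded, the heart of $3 \Rightarrow 1$ being simply that a winning strategy can never steer the play into a leaf that is currently losing.
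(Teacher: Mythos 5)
Your proposal is correct and follows essentially the same route as the paper: the cycle $1 \Rightarrow 2 \Rightarrow 3 \Rightarrow 1$, with a monotone circuit read as a graph game form ($\vee$-gates to Player 1, $\wedge$-gates to Player 2, acyclicity making the ambient condition irrelevant) for $1 \Rightarrow 2$, and for $3 \Rightarrow 1$ the observation that a strategy winning under the smaller set of winning leaves never steers the play into a currently losing leaf, hence still wins after promoting leaves. Your write-up merely spells out details (edge orientation, ignored inputs, the play-by-play check) that the paper leaves implicit.
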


Since the implicit proof of $2.$ implies $1.$ given in Theorem \ref{theo:monotonefunctions} does not provide us with any inclination of the size of the monotone circuit depending on the size of the reachability graph game form we started with, we give an alternate direct proof:

\begin{proposition}
\label{prop:reachabilitycircuittogame}
From any reachability graph game form (with no parameters) we can extract a monotone circuit computing the function it defines, such that the depth of the circuit is bounded by the size if the graph game form.
\begin{proof}
We start with a reachability graph game form. This may contain additional leaves with fixed outcomes, rather than just the designated leaves. It can also have cycles. For each designated vertex $s_i$ we consider the tree-unfolding of the graph. Whenever we reach a vertex in the tree-unfolding that is a duplicate of one of its predecessors, we replace it by a losing leaf. This ensures that we obtain a finite tree for each $s_i$, and put together, a finite forest. This is justified by the fact that if Player 1 can win a reachability game, she can do so without ever visiting a vertex twice.

We then deal with the additional leaves. If a Player 1 vertex $v$ has an edge to a winning leaf, we replace $v$ by a winning leaf. If a Player 1 vertex has an edge $e$ to a losing leaf, but also other outgoing edges, we remove $e$. If all outgoing edges of a Player 1 vertex $v$ go to losing leaves, we replace $v$ by a losing leaf. We perform the dual operations on Player 2 vertices. Both types of operations are repeated until none are applicable anymore, which will have removed all non-designated leaves.

We then merge again all copies of the original designated leaves in the forest. If all copies have been removed, the corresponding input is ignored. Labeling Player 1 vertices by $\vee$ and Player 2 vertices by $\wedge$ gives us a circuit computing the function defined by the initial graph game form.
\end{proof}
\end{proposition}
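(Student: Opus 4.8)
The plan is to carry out the three-stage construction sketched in the statement and to supply the game-theoretic lemmas that make each stage correct. First note that a reachability graph game form \emph{with no parameters} is just a finite two-player reachability game: there are no counter vertices, so only Players $1$ and $2$ make choices, and the winning condition ``the run is eventually absorbed in a winning leaf'' is prefix-independent because leaves are absorbing. Fix such a game $\mathcal{G}$ with designated leaves $\ell_1,\dots,\ell_m$, designated vertices $s_1,\dots,s_n$, and possibly additional leaves carrying a fixed (winning or losing) outcome.

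\textbf{Stage 1 (unfolding).} The lemma I would establish first is that \emph{in a reachability game, whenever Player $1$ wins from a vertex she has a winning strategy no play consistent with which ever revisits a vertex.} This follows from positional determinacy of reachability games: if $\sigma$ is a positional winning strategy and some $\sigma$-consistent play revisited a vertex $w$, then Player $2$ could repeat her choices between the two visits to $w$ forever, yielding a $\sigma$-consistent play that is never absorbed in a winning leaf, contradicting that $\sigma$ wins. (One can equally argue directly via the attractor of the winning leaves.) Using this, replacing in the tree-unfolding from $s_i$ every vertex that duplicates one of its ancestors by a losing leaf changes neither implication of ``Player $1$ wins from $s_i$'': a loop-free winning strategy transplanted from the cut tree still wins in $\mathcal{G}$ (it never revisits, so it never enters a discarded branch), and conversely a loop-free winning strategy in $\mathcal{G}$ never enters a discarded branch and so is already a strategy in the cut tree. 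Since no root-to-leaf path in the cut tree contains a repeated vertex, its depth is at most $|V|$; doing this for each $s_i$ produces a finite forest of depth at most $|V|$, which is where the depth bound comes from.

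\textbf{Stage 2 (eliminating fixed-outcome leaves).} Here I would check two facts about the local rewriting rules (a Player $1$ vertex with an edge to a winning leaf becomes a winning leaf; an edge from a Player $1$ vertex to a losing leaf is deleted when the vertex has other outgoing edges; a Player $1$ vertex all of whose edges go to losing leaves becomes a losing leaf; dually for Player $2$): \emph{(i)} every application strictly decreases, say, the number of edges plus the number of non-leaf vertices, so the process terminates; and \emph{(ii)} every application preserves, for every vertex and every input assignment, which player wins from it — this is the standard observation that collapsing a vertex from which its owner has a trivially forced outcome into the corresponding leaf, or pruning a move that its owner would never take in an optimal play, changes no game value (for pruning one argues: if the vertex was winning for the pruner she had a winning move other than the deleted one, and if it was losing removing an option of the losing player keeps it losing). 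When no rule applies, no non-designated leaf can still have a predecessor (some rule would fire), and since after Stage 1 each tree is rooted at a designated vertex, all non-designated leaves have been absorbed; one also notes every surviving non-leaf still has an outgoing edge, so the forest remains a legal game graph. A designated vertex $s_j$ that is already a leaf records an output bit fixed to its outcome, and a designated input $\ell_i$ no copy of which survives is an ignored input — both permitted by the statement.

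\textbf{Stage 3 (reading off the circuit).} Merging all surviving copies of each $\ell_i$ across the forest gives a DAG (merging sinks creates no cycles) of depth still at most $|V|$; label each internal vertex $\vee$ if owned by Player $1$ and $\wedge$ if owned by Player $2$, with the $\ell_i$ as inputs and the $s_j$ as outputs. A structural induction on each Stage-$2$ tree shows that, under input assignment $b$, a vertex $v$ evaluates to $1$ in this circuit iff Player $1$ wins from $v$ in that tree: the base case is the definition of the modified winning condition $C'$ of Definition~\ref{def:graphgameform} at a designated leaf, the $\vee$-step is ``Player $1$ wins from $v$ iff she has a move to a winning successor'', and the $\wedge$-step is its dual. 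Chaining the three stages, the $j$-th output of the circuit on $b$ equals $1$ iff Player $1$ wins from $s_j$ in $\mathcal{G}$ with the leaves set according to $b$, i.e.\ equals the $j$-th bit of $f(b)$; so the circuit computes $f$, and being a $\{\wedge,\vee\}$-circuit it is monotone. I expect the one genuinely delicate point to be Stage $1$ — making precise that cutting off repeated vertices as losses does not destroy Player $1$'s ability to win — with everything afterward being careful but routine bookkeeping.
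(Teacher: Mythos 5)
Your proposal is correct and follows essentially the same three-stage construction as the paper: unfold into trees cutting repeated vertices as losses (justified by loop-free winning strategies in reachability games), eliminate fixed-outcome leaves by the local simplification rules, then merge the designated-leaf copies and label Player 1 vertices by $\vee$ and Player 2 vertices by $\wedge$. You merely spell out the correctness lemmas (positional determinacy for Stage 1, termination and value preservation for Stage 2, the structural induction and the depth bound) that the paper leaves implicit.
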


The direction construction in Proposition \ref{prop:reachabilitycircuittogame} could still produce a circuit which is significantly larger than the original graph game form, at least in the intermediate steps. It seems very likely that graph game forms with reachability objectives are indeed more concise than monotone circuits. The former in particular can express least fixed point operators. Any monotone function has a least fixed point (by Knaster-Tarski), and any slice of a monotone function is monotone again. Thus, given a monotone function $f : \{0,1\}^{m+n} \to \{0,1\}^{m+n}$ we consider $\overline{y} \mapsto \mu \overline{x} \ f(\overline{x},\overline{y}) : \{0,1\}^n \to \{0,1\}^n$ defined as follows: Fix $\overline{y} \in \{0,1\}^n$ and consider the function $F_{\overline{y}} : \{0,1\}^m \to \{0,1\}^m$ where $F_y(\overline{x}) = \pi_{\overline{x}} f(\overline{x},\overline{y})$. Now $F_{\overline{y}}$ is a monotone function, and thus has a least fixed point $\overline{x}_{\overline{y}}$. We set  $(\mu \overline{x} \ f)(\overline{y}) = \overline{x}_{\overline{y}}$.

Starting with a graph game form with reachability objectives that defines $f$, we can obtain a graph game form with reachability objectives of the same size defining $(\mu \overline{x} \ f)$ by connection all leaves corresponding to $\overline{x}$-inputs to the corresponding designated output vertices. To see that this works as intended, note that unraveling who wins where in the new graph game form corresponds to the usual iteration leading up to the least fixed point. It follows in particular that $(\mu \overline{x} \ f)$ is a monotone function itself. Our reasoning here has established:

\begin{proposition}
The class of functions definable by graph game forms of a certain size and with certain winning conditions is closed under the least fixed point operator.
\end{proposition}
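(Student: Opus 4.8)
The plan is to make the construction sketched above precise. Suppose $(G,C)$ is a reachability graph game form with designated leaves $\ell_1,\dots,\ell_{m+n}$ and designated vertices $s_1,\dots,s_{m+n}$ defining $f$, where the indices $1,\dots,m$ carry the fixed-point coordinate $\overline x$. Build $G'$ by redirecting every edge whose head is some $\ell_j$ with $j\le m$ to have new head $s_j$; keep the winning condition $C$ (equivalently, the same local labelling of vertices) unchanged, discard the now-unreachable leaves $\ell_1,\dots,\ell_m$, and declare $\ell_{m+1},\dots,\ell_{m+n}$ and $s_1,\dots,s_m$ to be the designated leaves and vertices of $G'$. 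No vertices and no counter vertices are added and the edge count does not grow, so $G'$ has size at most that of $G$ and uses the same winning condition; hence the construction stays inside any fixed size bound and winning-condition class. It remains to show that $G'$ computes $\mu\overline x\,f$.

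I would prove the central claim --- for every $\overline y$ and every $i\le m$, Player $1$ wins $G'$ from $s_i$ with the designated leaves set to $\overline y$ iff the $i$-th bit of $(\mu\overline x\,f)(\overline y)$ is $1$ --- by tracking the Kleene iteration $\overline x^{(0)}=\overline 0$, $\overline x^{(k+1)}=F_{\overline y}(\overline x^{(k)})=\pi_{\overline x}f(\overline x^{(k)},\overline y)$, which increases to $\overline x_{\overline y}=(\mu\overline x\,f)(\overline y)$ within $m$ steps. A play of $G'$ agrees with a play of $G$ until the former first traverses a redirected (``feedback'') edge, at which point it reappears at some $s_l$ whereas in $G$ it would have reached $\ell_l$. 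For the ``only if'' direction, by induction on $k$: if $(\overline x^{(k)})_i=1$ then Player $1$ has a strategy in $G'$ from $s_i$ that wins while traversing at most $k-1$ feedback edges. The base case is any winning Player $1$ strategy in $G$ with leaves $(\overline 0,\overline y)$, which traverses no feedback edge at all, since a winning strategy never permits a losing leaf to be reached. The step takes a winning Player $1$ strategy in $G$ with leaves $(\overline x^{(k-1)},\overline y)$ and, whenever the play would traverse a feedback edge into $s_l$ --- which forces $(\overline x^{(k-1)})_l=1$, as $\ell_l$ cannot then be a losing leaf in that configuration --- switches to the strategy for $s_l$ that the induction hypothesis supplies at level $k-1$. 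Taking $k=m$ gives a strategy winning $G'$ from $s_i$ whenever $(\overline x_{\overline y})_i=1$: every play it produces traverses only finitely many feedback edges, so its tail is a winning play of $G$ reaching no designated leaf (or the play reaches a winning $\overline y$-leaf), and prefix-independence of $C$ transfers the win to $G'$. For the ``if'' direction I would argue dually: if $(\overline x_{\overline y})_i=0$ then, since $f(\overline x_{\overline y},\overline y)$ agrees with $(\overline x_{\overline y},\overline y)$ on the $\overline x$-coordinates, Player $1$ has no winning strategy from any $s_l$ with $(\overline x_{\overline y})_l=0$ in $G$ with leaves $(\overline x_{\overline y},\overline y)$; given any Player $1$ strategy in $G'$ from $s_i$, a Player $2$ reply is assembled segment by segment, each segment using the absence of a winning Player $1$ strategy at the current $s_l$ to push the play to a losing leaf, to the next feedback edge, or into an infinite $C$-avoiding run, so that no winning leaf is ever reached and Player $1$ loses the reachability game $G'$. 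Monotonicity of $\mu\overline x\,f$ then follows at once (or from Theorem \ref{theo:monotonefunctions}).

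The step I expect to be the crux --- and the reason the statement speaks of ``certain winning conditions'' --- is the treatment of plays that traverse feedback edges infinitely often. Under a reachability objective such a play reaches no winning leaf and is therefore simply losing for Player $1$, which is exactly what both halves of the argument use: in the ``only if'' direction it is why one must pass to the bounded-restart strategy rather than the naive ``restart on every feedback edge'' strategy, and in the ``if'' direction it is why Player $2$'s stitched reply works even when Player $2$ is dragged around the feedback region unboundedly often. For a general prefix-independent $C$ this breaks down --- a B\"uchi-type condition can make such a play winning for Player $1$, destroying the correspondence with the least fixed point --- so the construction goes through verbatim for reachability objectives and, more generally, for any winning condition under which a play cycling through the feedback region forever is necessarily losing for Player $1$; pinning down exactly which conditions have this property, or replacing the construction by a different size-preserving one, is where the real work would go.
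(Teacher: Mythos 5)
Your construction (rerouting each $\overline{x}$-leaf edge into the corresponding designated vertex, which preserves size and the type of winning condition) is exactly the paper's, and your induction along the Kleene iteration, with the dual stitched Player-2 reply, is simply a detailed version of the paper's one-line justification that unraveling who wins in the new graph game form corresponds to the usual iteration converging to the least fixed point; the proof is correct. Your caveat about general prefix-independent objectives is also consistent with the paper, whose own argument is stated only for reachability objectives (note that prefix-independence forces reachability targets to be absorbing leaves, which is what makes infinite cycling through the feedback edges losing), so the proposition's ``certain winning conditions'' should indeed be read with that restriction.
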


While least fixed point operators have received a lot of attention in the broader area (see e.g.~\cite{dawar6}), the combination of fixed point operators and monotone circuits appears to be unexplored. We are pointing out some specific open questions arising from our considerations:

\begin{question}\label{question:consiseness}
\hfill
\begin{enumerate}
\item If $f$ is expressible by a monotone circuit of size $K$, what can be said about the required size of a monotone circuit computing $\mu \overline{x} \ f$?
\item If $f$ is expressible by a graph game form with reachability objectives of size $L$, what can be said about the required size of a monotone circuit computing $\mu \overline{x} \ f$?
\end{enumerate}
\end{question}

For winning conditions other than reachability (or, dually, safety) extracting the monotone circuit can be more complicated. A related question, namely the complexity of the logical definability of the winning regions, was studied for parity games in \cite{dawar5}.

We restrict our attention to functions of type $f : \{0,1\}^m \to \{0,1\}^m$. Let the \emph{repetition number} of such a function be the largest $k$ such that there is $w \in \{0,1\}^m$ with $w, f(w), f(f(w)), \ldots, f^{k-1}(w)$ all being pairwise distinct. We denote it by $\mathrm{rn}(f)$. Clearly, for reasons of cardinality alone we have that:

\begin{observation}
\label{obs:rntrivialbound}
$\mathrm{rn}(f) \leq 2^m$
\end{observation}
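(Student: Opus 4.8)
The plan is to argue purely by cardinality. By definition, $\mathrm{rn}(f)$ is the largest $k$ for which there exists a witness $w \in \{0,1\}^m$ with $w, f(w), \ldots, f^{k-1}(w)$ pairwise distinct; I would simply observe that all these $k$ elements live in $\{0,1\}^m$, a set of exactly $2^m$ elements, so $k$ cannot exceed $2^m$.

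In more detail: first I would note that the set of $k$ admitting such a witness is nonempty (it contains $k=1$, witnessed by any $w$) and bounded, so $\mathrm{rn}(f)$ is well-defined. Then I would fix $k = \mathrm{rn}(f)$ together with a witness $w$, and observe that the map $i \mapsto f^i(w)$ from $\{0,1,\ldots,k-1\}$ into $\{0,1\}^m$ is injective precisely because the listed values are pairwise distinct. An injection from a $k$-element set into a $2^m$-element set forces $k \leq 2^m$, which is the claim.

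I do not anticipate any obstacle: this is a genuine one-line pigeonhole count, and it is stated here only to serve as the crude baseline against which the finer, game-theoretic bounds developed later in this section are to be compared.
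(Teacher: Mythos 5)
Your proof is correct and matches the paper's reasoning: the paper simply notes the bound holds "for reasons of cardinality alone," which is exactly your pigeonhole argument that the $k$ pairwise distinct iterates all lie in the $2^m$-element set $\{0,1\}^m$. Nothing further is needed.
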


If we would ask for a simple cycle rather than a simple path, i.e.~in addition to the requirements above also for $f^k(w) = w$, then for monotone $f$, it follows that all $f^i(w)$, $f^j(w)$ for $i, j < k$, $i \neq j$ must be incomparable w.r.t.~$\leq$. Sperner's theorem (see e.g.~\cite{lubell}) then shows that $k \leq \binom{m}{\lfloor \frac{m}{2} \rfloor}$. A construction in \cite{aracena} shows that this bound is actually attained. The lower bound from \cite{aracena} of course also applies to our repetition numbers, and we immediately obtain:

\begin{corollary}
There is a monotone $f : \{0,1\}^m \to \{0,1\}^m$ with $\mathrm{rn}(f) \geq \binom{m}{\lfloor \frac{m}{2} \rfloor}$.
\end{corollary}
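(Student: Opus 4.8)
The plan is to reuse the construction of \cite{aracena} essentially verbatim, observing only that the conclusion drawn there is strictly stronger than what we need. That construction produces, for each $m$, a monotone $f : \{0,1\}^m \to \{0,1\}^m$ together with a point $w$ whose orbit $w, f(w), f^2(w), \ldots, f^{k-1}(w)$ consists of $k = \binom{m}{\lfloor m/2 \rfloor}$ pairwise distinct elements and, in addition, closes up into a cycle, $f^k(w) = w$. Discarding the closure condition leaves exactly a witness for $\mathrm{rn}(f) \geq k$, which is the claim. So the first --- and essentially only --- step is to invoke \cite{aracena} and drop the requirement $f^k(w) = w$, since the definition of the repetition number asks for a simple path of iterates, not a simple cycle.

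If one prefers a self-contained argument, the short version I would write out instead is as follows. Let $L$ be the middle layer of the Boolean lattice, i.e.\ the set of vectors in $\{0,1\}^m$ with exactly $\lfloor m/2 \rfloor$ ones; it is an antichain of size $N := \binom{m}{\lfloor m/2\rfloor}$. Fix any cyclic enumeration $L = \{a_0, a_1, \ldots, a_{N-1}\}$, and define $f : \{0,1\}^m \to \{0,1\}^m$ by $f(x) = \bigvee \{\, a_{i+1 \bmod N} : 0 \leq i < N,\ a_i \leq x \,\}$, with the empty join read as $0^m$. Then $f$ is monotone because enlarging $x$ can only enlarge the index set over which the join is taken, and since $L$ is an antichain the only element of $L$ below $a_i$ is $a_i$ itself, so $f(a_i) = a_{i+1 \bmod N}$. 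Consequently $f^j(a_0) = a_j$ for $0 \leq j < N$, and these are pairwise distinct elements of $L$, which gives $\mathrm{rn}(f) \geq N$ directly.

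The only point needing a little care in that alternative is verifying that the ``upward join'' extension really is monotone and really restricts to the intended shift on $L$ --- both are immediate, the latter using precisely the antichain property --- so I do not expect a genuine obstacle here; the substantive work (arranging a cyclic order of a maximum antichain so as to obtain a true periodic orbit meeting Sperner's upper bound exactly) has already been carried out in \cite{aracena}, and for the repetition number we need strictly less. One could also add the remark that this lower bound and the trivial upper bound $\mathrm{rn}(f) \leq 2^m$ of Observation~\ref{obs:rntrivialbound} do not match, and that pinning down the exact maximal repetition number is a separate question, in contrast to the cycle case where Sperner's theorem and \cite{aracena} together settle it.
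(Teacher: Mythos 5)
Your main argument is exactly the paper's: the monotone map of \cite{aracena} realizing a cycle of length $\binom{m}{\lfloor \frac{m}{2} \rfloor}$ in particular yields that many pairwise distinct iterates, so the lower bound transfers verbatim to $\mathrm{rn}(f)$. Your optional self-contained middle-layer construction (cyclically shifting the maximum antichain and extending by upward joins) is also correct and is essentially the cited construction spelled out, so nothing needs fixing.
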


We can turn the function $f$ attaining the bound $\binom{m}{\lfloor \frac{m}{2} \rfloor}$ into a parameterized game with a single parameter which shows that the exponential bounds in Theorem \ref{theo:main} are not entirely avoidable: As explained above, we can turn the monotone circuit computing $f$ into a reachability game. Each designated leaf is turned into a counter-access state, with the green edge leading to the corresponding designated source. We then pick two adjacent words $w$, $f(w)$ from the witnessing cycle. The red edge from the $k$-th former leaf goes to a winning leaf if $f(w)(k) = 1$, and to a losing leaf otherwise. Finally, we add a gadget involving a starting vertex that allows Player 2 to reach exactly those designated sources corresponding to a $1$ in $w$. Now Player 1 can win for some parameter value $N$ if and only if $f^N(f(w)) \geq w$. By choice of $w$, this can only hold if already $f^N(f(w)) = w$, i.e.~$N$ needs to be at least $\binom{m}{\lfloor \frac{m}{2} \rfloor} - 1$.

Some more combinatorial results related to iterated applications of monotone functions are found in \cite{robert}. Better bounds might be obtained by taking into account not only $m$, but the size of the graph game form, too. As mentioned in the introduction, connections to circuit complexity seem likely. To be precise, we are asking:

\begin{question}
\label{question:rn}
If $f : \{0,1\}^m \to \{0,1\}^m$ is realized by a graph game form of size $K$ and objectives of type $C$, how big can $\mathrm{rn}(f)$ be? Particularly relevant are the cases of reachability and parity objectives.
\end{question}

\section{Proving the main theorem}
We now have the ingredients in place to prove our main theorem. For that, we show how to obtain graph game forms from parameterized games; and how the functions defined by those graph game forms are linked to who wins for certain parameter values in the original parameterized games:

\begin{definition}
Given a parameterized game with $N + 1$ counters, the induced graph game form has $N$ counters and is obtained as follows: For each vertex $v$ in $V_c^{N+1}$ we make two copies, $v^{\mathrm{in}}$ and $v^{\mathrm{out}}$. We remove all outgoing edges from $v^{\mathrm{out}}$, and we remove the red outgoing edge from $v^{\mathrm{in}}$, and colour the green outgoing edge of $v^{\mathrm{in}}$ black. We delete all incoming edges to $v^{\mathrm{in}}$, and then designate $v^{\mathrm{in}}$ as a source and $v^{\mathrm{out}}$ as a sink in the graph game form.
\end{definition}

\begin{lemma}
\label{lemma:connection}
Consider parameterized game with $N + 1$ counters, and let $f : \{0,1\}^m \to \{0,1\}^m$ be the function computed by the induced graph game form for fixed values of $n_0^1,\ldots,n_0^N$. Let $w \in \{0,1\}^m$ denote for each $v \in V_c^{N+1}$ whether Player 1 can win from there with the additional parameter value $n_0^{N+1}$. Then $f(w)$ states for each $v \in V_c^{N+1}$ whether Player 1 can win from there with the additional parameter value $n_0^{N+1} + 1$.
\begin{proof}
Let us assume that the $j$-th bit of $f(w)$ is $1$. Consider a strategy $\sigma$ of Player $1$ witnessing this as in Definition \ref{def:graphgameform}. Let $L$ be the set of leaves that Player 2 can reach if Player $1$ follows $\sigma$. Clearly, whenever $\ell_i \in L$, then $w(i) = 1$. If $w(i) = 1$, then Player 1 has a winning strategy $\sigma_i$ starting from the vertex $v_i \in V_c^{N+1}$ that gave rise to $\ell_i$ for parameter values $n_0^1,\ldots,n_0^N, n_0^{N+1}$ by assumption. Now the strategy \emph{follow $\sigma$ until some $\ell_i \in L$ is reached, then switch to following $\sigma_i$} is a winning strategy from $v_j$ for parameter values $n_0^1,\ldots,n_0^N, n_0^{N+1}+1$.

Conversely, if Player 1 has a winning strategy $\sigma$ in the parameterized game $n_0^1,\ldots,n_0^N, n_0^{N+1}+1$ starting from some vertex $v_j \in V_c^{N+1}$, then the prefix of that strategy up to the point where again a vertex in $V_c^{N+1}$ is reached also serves as a witnessing strategy in the graph game form. Any such reachable vertex $v_i \in V_c^{N+1}$ must correspond to a starting vertex from which Player 1 can win with parameters $n_0^1,\ldots,n_0^N, n_0^{N+1}$.
\end{proof}
\end{lemma}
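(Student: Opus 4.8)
The plan is to prove the biconditional by exhibiting a step-by-step correspondence between plays of the original game started at $v_j$ with the $(N+1)$-th counter set to $n_0^{N+1}+1$ and plays of the induced graph game form started at the source $v_j^{\mathrm{in}}$. The organizing observation is that the $(N+1)$-th counter is inspected only at vertices of $V_c^{N+1}$: since the run begins at such a vertex with value $n_0^{N+1}+1>0$, the first move is forced to be a single decrement (to $n_0^{N+1}$) followed by the green edge, which is exactly what the black edge out of the source $v_j^{\mathrm{in}}$ encodes. From that point the two plays proceed in lock-step — the same vertices of $V_1,V_2,V_c^1,\dots,V_c^N$ are visited and the same $N$ remaining counters evolve identically — until the original run first returns to some $v_i\in V_c^{N+1}$, which is precisely the moment the graph game form run reaches the corresponding sink $v_i^{\mathrm{out}}$ and halts. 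I would set up this correspondence once and then read off both implications from it.

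For the direction $f(w)(j)=1\Rightarrow$ winning with $n_0^{N+1}+1$, I would fix a Player 1 strategy $\sigma$ witnessing $f(w)(j)=1$ in the graph game form under input $w$ and let $L$ be the set of sinks reachable against $\sigma$. Because input $w(i)=0$ turns $v_i^{\mathrm{out}}$ into a losing leaf via the modified condition $C'$ of Definition \ref{def:graphgameform}, the fact that $\sigma$ wins forces $w(i)=1$ for every $v_i^{\mathrm{out}}\in L$; by the meaning of $w$, Player 1 then has a winning strategy from $v_i$ with the additional counter at $n_0^{N+1}$. I would transport $\sigma$ to the original game and splice in these continuation strategies the moment a vertex of $V_c^{N+1}$ is met; prefix-independence of $C$ guarantees that discarding the finite prefix up to the switch does not affect the outcome, and runs that never meet $V_c^{N+1}$ coincide with $C$-winning runs of $\sigma$.

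For the converse I would start from a Player 1 winning strategy $\sigma$ in the original game from $v_j$ with counter value $n_0^{N+1}+1$, and restrict it to the initial segment up to the first return to $V_c^{N+1}$; under the correspondence this restriction is a strategy from $v_j^{\mathrm{in}}$ in the graph game form. To check it wins under input $w$, note that any reachable sink $v_i^{\mathrm{out}}$ arises from a history, reachable under $\sigma$, arriving at $v_i$; the tail of $\sigma$ beyond that history is itself a winning strategy from $v_i$, so the relevant input bit is $1$, while plays that reach no sink are exactly those $C$-winning runs of $\sigma$ that never revisit $V_c^{N+1}$ (again invoking prefix-independence).

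The step I expect to be the crux is the counter bookkeeping at the moment of splicing. The clean part is the $(N+1)$-th counter: by the forced single decrement it stands at exactly $n_0^{N+1}$ at every first return, which is what aligns the splicing target with the definitions of $w$ and of $f(w)$. The delicate part is reconciling the states of the \emph{retained} counters between the two games at that return, since the continuation strategies supplied by $w$ and the tails of $\sigma$ must be invoked at the counter configuration that has actually been reached; making this matching precise — keeping the $N$ graph-game-form counters synchronized with the original game throughout the shared segment, and confirming that prefix-independence is what licenses every strategy splice — is where the real work of the proof lies.
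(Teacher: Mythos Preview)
Your approach is exactly the paper's: the paper also splits into the two directions, transports a witnessing strategy $\sigma$ for $f(w)(j)=1$ into the original game and splices in continuation strategies $\sigma_i$ at the first return to $V_c^{N+1}$, and for the converse restricts a winning strategy in the original game to its initial segment. The paper does not set up the lock-step correspondence as explicitly as you do, nor does it invoke prefix-independence by name, but the skeleton is identical.

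Where you differ is that you single out the bookkeeping of the first $N$ counters at the splice as ``the crux,'' whereas the paper simply does not mention it: it asserts that splicing $\sigma_i$ after $\sigma$ wins, without noting that $\sigma_i$ is, by the definition of $w$, winning for the \emph{initial} values $n_0^1,\ldots,n_0^N$, while at the splicing point those counters may already have been decremented along the segment from $v_j$ to $v_i$. Your instinct that this is delicate is right, and in fact it is not merely a matter of being more careful. Take a chain of green edges $v_0\to u\to v_1\to u'\to W$ with all red edges to a losing leaf, where $v_0,v_1\in V_c^{2}$ and $u,u'\in V_c^{1}$, and reachability objective~$W$. With $n_0^1=1$ and $n_0^{2}=1$ one has $w(v_1)=1$, and in the induced graph game form the play from $v_0^{\mathrm{in}}$ reaches $v_1^{\mathrm{out}}$, giving $f(w)(v_0)=1$; yet from $v_0$ with $(n_0^1,n_0^2)=(1,2)$ Player~1 loses, because counter~1 is exhausted at $u'$. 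So the step you isolate as ``where the real work lies'' is a genuine gap in the argument as presented---both in your proposal and in the paper---and not one that prefix-independence or more careful synchronization can close for the lemma as stated.
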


\begin{corollary}
\label{corr:boundingvalue}
Consider parameterized game with $N + 1$ counters, and let $f : \{0,1\}^m \to \{0,1\}^m$ be the function computed by the induced graph game form for fixed values of $n_0^1,\ldots,n_0^N$. Then for any choice of $n_0^{N+1} \in \mathbb{N}$ there exists some $\overline{n}_0^{N+1} \leq \mathrm{rn}(f)$ such that Player $1$ wins the game for $n_0^1,\ldots,n_0^N,n_0^{N+1}$ iff she wins the game for $n_0^1,\ldots,n_0^N,\overline{n}_0^{N+1}$.
\begin{proof}
Let $w$ be chosen as in Lemma \ref{lemma:connection} for $n_0^{N+1} = 0$. By iterative application of Lemma \ref{lemma:connection}, the sequence $w,f(w),f(f(w)),f^3(w),\ldots$ describes the information whether the game is won for Player 1 for $n_0^{N+1} = 0$, $n_0^{N+1} = 1$, and so on. By definition of the repetition number, any potential value that appears in this sequence already appears at an index $\overline{n}_0^{N+1} \leq \mathrm{rn}(f)$.
\end{proof}
\end{corollary}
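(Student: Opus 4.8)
The plan is to read off the winning information for successive values of the last parameter as the forward orbit of $f$ and then invoke finiteness. Fix $n_0^1,\ldots,n_0^N$ and let $w^{(0)} \in \{0,1\}^m$ be the vector that records, for each $v \in V_c^{N+1}$, whether Player~1 wins from $v$ with additional parameter value $0$. Lemma~\ref{lemma:connection} says precisely that $f(w^{(0)})$ is the analogous vector for additional parameter value $1$; applying the lemma again gives $f^2(w^{(0)})$ for value $2$, and a trivial induction yields that $f^t(w^{(0)})$ encodes the win/lose status of every $V_c^{N+1}$-vertex for additional parameter value $t$, for all $t \in \N$. (If the designated start vertex of the game is not itself in $V_c^{N+1}$, one notes in addition that whether Player~1 wins from it with parameters $(n_0^1,\ldots,n_0^N,t)$ is obtained by playing the induced graph game form from that vertex with its leaves assigned according to $f^t(w^{(0)})$, hence is a fixed function of $f^t(w^{(0)})$; alternatively, one may assume without loss of generality that the start vertex lies in $V_c^{N+1}$.)

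Next I would analyse the orbit $w^{(0)}, f(w^{(0)}), f^2(w^{(0)}), \ldots$ as a map $\N \to \{0,1\}^m$. Since $\{0,1\}^m$ is finite this sequence cannot be injective, so there is a least $p$ with $f^p(w^{(0)}) = f^j(w^{(0)})$ for some $j < p$; by minimality, $w^{(0)},\ldots,f^{p-1}(w^{(0)})$ are pairwise distinct, so by the definition of the repetition number $p \le \mathrm{rn}(f)$. Because $f$ is a function, the identity $f^p(w^{(0)}) = f^j(w^{(0)})$ propagates to $f^{p+r}(w^{(0)}) = f^{j+r}(w^{(0)})$ for all $r \ge 0$, so the orbit is eventually periodic and every vector occurring in it already occurs at some index strictly below $p$, hence at an index $\le \mathrm{rn}(f)$.

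Finally, applying the previous paragraph to the index $n_0^{N+1}$ produces an index $\overline{n}_0^{N+1} \le \mathrm{rn}(f)$ with $f^{\overline{n}_0^{N+1}}(w^{(0)}) = f^{n_0^{N+1}}(w^{(0)})$; by the first paragraph this means Player~1 wins the game for $n_0^1,\ldots,n_0^N,n_0^{N+1}$ iff she wins it for $n_0^1,\ldots,n_0^N,\overline{n}_0^{N+1}$. The only substantive ingredient is Lemma~\ref{lemma:connection}, which is already available; everything else is the elementary ``$\rho$-shape'' of the iteration of a self-map on a finite set. There is no real obstacle here; the single point needing any care is the bookkeeping that connects ``winning the game'' (with its fixed start vertex) to the win/lose vector on $V_c^{N+1}$ that $f$ actually acts on, which is dispatched by the parenthetical remark in the first paragraph.
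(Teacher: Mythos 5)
Your proof is correct and follows essentially the same route as the paper: choose the winning vector $w$ for parameter value $0$, iterate Lemma \ref{lemma:connection} so that $f^t(w)$ encodes the situation for value $t$, and bound the index at which any orbit value first appears by $\mathrm{rn}(f)$. The extra details you supply (the eventual-periodicity argument and the bookkeeping relating the start vertex to the vector on $V_c^{N+1}$) are just explicit versions of steps the paper leaves implicit.
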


\begin{proof}[Proof of Theorem \ref{theo:main}]
We can apply Corollary \ref{corr:boundingvalue} to each counter individually (by reordering them). Observation \ref{obs:rntrivialbound} provides the concrete bounds.
\end{proof}

\section{Generic Application}
\label{sec:application-generic}
It is often a convenient technique to reduce a decision problem to the existence of a winning strategy in some game, and then use known decidability and complexity results for the latter as upper bounds for the former. The games constructed in such reductions often are intuitively accessible, and have an interpretation of both players making choices trying to either prove or disprove the statement.

A simple yet fundamental example is the universality problem of non-deterministic parity automata mentioned in the introduction: From a parity automaton we can move to a parity game, where the first player controls the input to the automaton, and the other player controls how non-determinism is resolved. The winning condition for Player 2 in the game is the acceptance condition of the original automaton. If there is some $\omega$-word $\rho$ which is not accepted by the automaton, then if Player 1 plays according to $\rho$, Player 2 cannot win against this. Conversely, if Player 2 has a winning strategy, then the automaton accepts all words. Deciding who wins parity games is a well-studied computational problem, of course (e.g.~\cite{jurdzinski,khoussainov}). A crucial observation is that parity games are positionally determined, which makes the required reasoning much simpler -- and this observation is less directly accessible if we were to reason about parity automata directly.

Now the generic form of an application of our main theorem proceeds as follows: Assume that we can reduce deciding some property $\phi$ of some type of structure $M$ to the existence of winning strategies in a derived game $G(M)$. The construction of $G(M)$ tells us how we can define \emph{parameterized structures} $M$ in such that a way that for a parameterized structure $M(n_0,\ldots,n_k)$ the construction yields a parameterized game $G(M(n_0,\ldots,n_k))$. Now our main theorem tells us that any quantified statement like $\Phi(M) := \forall n_0 \forall n_1 \exists n_2 \ \ldots \forall n_{k-1} \exists n_k \ \phi(M(n_0,\ldots,n_k))$ is equivalent to the bounded version $\forall n_0 \leq N \forall n_1 \leq N \exists n_2 \leq N  \ \ldots \forall n_{k-1} \leq N \exists n_k \leq N \ \phi(M(n_0,\ldots,n_k))$, where $N$ is at most exponential in the size of $G(M)$. If deciding the winner in games of type $G(M)$ takes time $T(n)$, then it follows that formula like $\Phi$ can be decided in time $2^{nk}T(2^n)$. Improved bounds in Question \ref{question:rn} would improve this time bound.

If we apply the generic form to the fundamental example of non-deterministic automata (of some type), we arrive at the notion of a parameterized automaton as defined in \cite[Definition 32]{ba-arxiv}. This is just an automaton with counters starting at some initial values, which can be decremented and tested for $0$. Counter automata are of course a mainstay of automata theory (e.g.~\cite{valiant,boehm}), but the similarities are superficial: Counter automata enhance the usual capabilities of automata by allowing access to counter which start at $0$, and are incremented, decremented and tested for $0$. Since the counter in a parameterized automaton can never be incremented, they do not confer any additional computational power. Moreover, it is essential for the notion to make sense that the counter values start at some parameter values. Parameterized automata thus do not recognize individual languages, but rather families of languages indexed by the parameter values.

The types of questions that our result shows to be decidable for such uniform families of regular (or $\omega$-regular) languages include e.g.: $$\forall n \in \mathbb{N} \ \exists k \in \mathbb{N} \ \exists j \in \mathbb{N} \ \ L_n \subseteq M_k \wedge M_k \subseteq L_{n+j}$$ where $(L_n)_{n \in \mathbb{N}}$ and $(M_n)_{n \in \mathbb{N}}$ are families of languages coded by parameterized automata. We could have significantly more involved quantifier structures, and also involve operations on regular languages that are witnessed by automata constructions, such as taking products, intersections, Kleene-star, and so on.

\section{Application to chains of strategies}
\label{sec:application-chains}
Our motivating application for parameterized games is to study chains of strategies in the context of dominance between strategies. We recall:

\begin{definition}
Consider a game with payoff functions $\pi$ for Player $1$. A strategy $\sigma$ of Player $1$ is \emph{dominated} by an alternative strategy $\sigma'$, if for all strategies $\tau$ of Player $2$, we find that $\pi(\Run(\sigma,\tau)) \leq \pi(\Run(\sigma',\tau))$ and there exists a strategy $\tau'$ of Player $2$ such that $\pi(\Run(\sigma,\tau')) < \pi(\Run(\sigma',\tau'))$. A strategy $\sigma$ which is not dominated by any other strategy is called \emph{admissible}.
\end{definition}

A rational player should always be willing to change her strategy to another one dominating her original choice. If every strategy is dominated by an admissible strategy, then the admissible strategies can be seen as the rational choices. An approach to reactive synthesis based on this concept was proposed in \cite{BrenguierRS17}.

The investigation of dominance and admissibility in games played on finite graphs was initiated in \cite{Berwanger07} and subsequently became an active research topic (e.g.~\cite{BRS14,BrenguierPRS16,BGRS17,pauly-raskin2,DBLP:conf/birthday/BassetRS17}). A central result in \cite{Berwanger07} is that for boolean objectives (i.e.~games where the payoff function has range $\{0,1\}$) and perfect information, every strategy is either admissible or dominated by an admissible one. \cite[Example 1]{ba-arxiv} (reproduced here as Figure \ref{fig:best_animal}) shows that already for three distinct payoffs this theorem fails; and \cite[Example 9]{BrenguierPRS17} (reproduced below) shows that for reachability games with imperfect information, the theorem fails, too.

\begin{example}[{\cite[Example 9]{BrenguierPRS17}}]
\label{ex:noadmissible}
We exhibit a regular game with a reachability objective (to reach the state marked $w$) lacking observation-based admissible strategies for Player $1$. The graph is depicted in Figure \ref{fig:ex-noadm}.  Player $1$ controls circle vertices, Player $2$ controls box vertices. Player $1$ wins all plays that eventually enter the vertex $w$. Player 1 has imperfect information: he can not differentiate between $s_1$ and $s_2$, and not between $t_1$ and $t_2$. As a consequence, the observation-based strategies available to Player $1$ are essentially the strategies $\sigma_n$, one for each $n \in \mathbb{N}$, which play the action "$0$" for $n$ consecutive steps followed by the action "$1$", and $\sigma_\infty$, which always plays the action "$0$". Then it is easy to show that $\sigma_\infty \prec \sigma_0 \prec \sigma_1 \ldots$, hence there is no observation-based admissible strategy for Player~1 in this game.

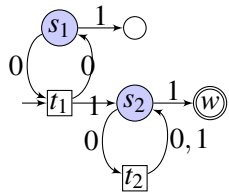
\begin{figure}[htbp]
  \begin{center}
    \begin{tikzpicture}[xscale=0.5,yscale=0.5,minimum size=0.3cm, inner
      sep=1pt]
      \path (0,0) node[draw,rectangle, label=below:$ $] (a) {$t_1$};
      \path (0,2) node[draw,circle,fill=blue!20, label=below:$ $] (b) {$s_1$};
      \path (2,2) node[draw,circle, label=below:$ $] (c) {$ $};
      \path (2,0) node[draw,circle,fill=blue!20, label=below:$ $] (d) {$s_2$};
      \path (2,-2) node[draw,rectangle, label=below:$ $] (e) {$t_2$};
      \path (4,0) node[draw,circle,accepting, label=below:$ $] (f) {$w$};

      \draw[arrows=-latex'] (-1,0) -- (a) ;

      \draw[arrows=-latex'] (a) .. controls +(20:30pt) and +(-20:30pt) .. (b) node[pos=.5,right=-0.2cm] {$0$};
      \draw[arrows=-latex'] (b) .. controls +(20:-30pt) and +(-20:-30pt) .. (a) node[pos=.5,left] {$0$};
      \draw[arrows=-latex'] (b) -- (c) node[pos=.5,above] {$1$};
      \draw[arrows=-latex'] (a) -- (d) node[pos=.5,above=-0.2cm] {$1$};
      \draw[arrows=-latex'] (e) .. controls +(20:30pt) and +(-20:30pt) .. (d) node[pos=.5,right] {$0,1$};
      \draw[arrows=-latex'] (d) .. controls +(20:-30pt) and +(-20:-30pt) .. (e) node[pos=.5,left] {$0$};
      \draw[arrows=-latex'] (d) -- (f) node[pos=.5,above] {$1$};
    \end{tikzpicture}
    \caption{Player 1 can not differentiate between $s_1$ and $s_2$, and not between $t_1$ and $t_2$. In this example, no strategy is admissible for Player 1.}
    \label{fig:ex-noadm}
  \end{center}
\end{figure}
\end{example}

In both the quantitative (i.e.~more than two distinct payoffs) and the imperfect information case, we see two particular features arising:

\begin{enumerate}
\item The counterexamples are built around situations where repeating a certain action (i.e.~choosing a certain edge at some vertex in the graph) more and more often produces strategies that dominate the one repeating that action less often; however; always choosing that action does not dominate choosing it finitely often, and in fact, may be strictly worse. Trying to formalize the idea of \emph{repeat this action a large but finite number of times} yields the notion of a parameterized strategy, which can be realized by a parameterized automaton (cf.~Section \ref{sec:application-generic}).
\item We can decide whether one strategy dominates another, if both are regular and given via automata realizing them. Moreover, this is proven by constructing a game based on the original game and the two strategies involved, in such a way that a specific player has a winning strategy in the derived game if and only if the dominance holds. For the imperfect information case, this is done in \cite[Section 7]{BrenguierPRS17}; for the quantitative case, this is \cite[Lemma 39]{ba-arxiv}.
\end{enumerate}

The first observation formed the motivation in \cite{ba-arxiv} to consider \emph{increasing chains of strategies} rather than single strategies as the unit of analysis. Under mild constraints (restricting consideration to some countable set of strategies, such as only computable or regular ones), it is shown that any chain of strategies is maximal (w.r.t.~dominance) or itself dominated by a maximal chain. The argument is non-constructive, and proceeds via Zorn's Lemma. A more concrete approach is to study chains realized by parameterized automata, dubbed \emph{uniform chains}. In generalized safety/reachability games, these suffice indeed to ensure that every strategy is admissible or dominated by a maximal uniform chain. Of course, one can also consider uniform chains in games with imperfect information.

An obvious algorithmic question arises, namely to decide whether one uniform chain is dominated by another. Applying the construction that reduces dominance between (single) strategies to the existence of winning strategies in a derived game to uniform chains, we obtain a parameterized game with two parameters $n_0, n_1$. Dominance between the uniform chains then corresponds to asking whether for all $n_0$ there exists an $n_1$ such that Player 1 wins the corresponding instantiation of this parameterized game. In \cite{ba-arxiv}, decidability (in polynomial time) of this question is shown by a lengthy argument that essentially provides bounds for how large values of the parameters need to be considered. The arguments in particular make use of the specific winning conditions considered there, namely generalized safety/reachability games.

As alternative proof, we can conclude the existence of such bounds from our main theorem. While the bounds obtained such are worse than the ones obtained \emph{by hand} in \cite{ba-arxiv} (and in particular do not establish decidability in polynomial time), the restriction to generalized safety/reachability games is unnecessary for this. In fact, since \cite[Lemma 39]{ba-arxiv} holds (with small adjustments) for arbitrary $\omega$-regular conditions, we immediately get:

\begin{corollary}
Given a perfect-information game with finitely many $\omega$-regular payoffs, and two uniform chains of strategies, it is decidable whether the first is dominated by the second.
\end{corollary}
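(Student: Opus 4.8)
The plan is to express chain dominance as a single quantified statement about who wins a parameterized game, and then apply Theorem~\ref{theo:main} exactly along the lines of Section~\ref{sec:application-generic}. First I would recall the ingredients from \cite{ba-arxiv}. A uniform chain is a family $(\sigma_n)_{n \in \mathbb{N}}$ of Player~1 strategies realized by a single parameterized automaton, with parameter value $n$ producing the strategy $\sigma_n$. The reduction underlying \cite[Lemma~39]{ba-arxiv} builds, from two automata realizing strategies $\sigma$ and $\sigma'$ together with the underlying arena, a derived game $G(\sigma,\sigma')$ in which Player~1 has a winning strategy iff $\sigma$ is dominated by $\sigma'$; its winning condition is obtained from the payoff conditions along two simulated runs (one tracking $\sigma$, one tracking $\sigma'$, both against a common Player~2 behaviour), and hence is $\omega$-regular whenever the finitely many payoffs are. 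Plugging the two parameterized automata for $\sigma_{n_0}$ and $\sigma'_{n_1}$ into this construction turns $G$ into a parameterized game $G(n_0,n_1)$ with two groups of counters, one fed by $n_0$ and one by $n_1$, such that Player~1 wins $G(n_0,n_1)$ iff $\sigma_{n_0}$ is dominated by $\sigma'_{n_1}$. By the definition of dominance between uniform chains, the first chain is dominated by the second iff $\forall n_0\, \exists n_1$ Player~1 wins $G(n_0,n_1)$.

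Next I would invoke the generic machinery of Section~\ref{sec:application-generic}. Theorem~\ref{theo:main} shows that this $\forall n_0\, \exists n_1$ statement is equivalent to its bounded version $\forall n_0 \le N\, \exists n_1 \le N$ Player~1 wins $G(n_0,n_1)$, where $N$ is at most exponential in the size of $G$. This is a finite conjunction of finite disjunctions of assertions of the form \emph{Player~1 wins $G(n_0,n_1)$} for fixed parameter values, and each such $G(n_0,n_1)$ unfolds, as in the Observation following the definition of parameterized games, to an ordinary finite game with the same $\omega$-regular winning condition, whose winner is decidable by the classical theory of $\omega$-regular games. Evaluating the resulting finite Boolean combination then decides chain dominance.

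The step I expect to be the real work is checking that \cite[Lemma~39]{ba-arxiv} goes through for arbitrary $\omega$-regular payoffs with only cosmetic changes, as asserted in the paragraph preceding the statement. Two points need care. First, the derived game must simultaneously encode both halves of dominance: the universal clause ``$\pi(\Run(\sigma,\tau)) \le \pi(\Run(\sigma',\tau))$ for all $\tau$'' and the existential clause ``$\pi(\Run(\sigma,\tau')) < \pi(\Run(\sigma',\tau'))$ for some $\tau'$''. With finitely many $\omega$-regular payoff sets, the relations $\pi(\rho) \le \pi(\rho')$ and $\pi(\rho) < \pi(\rho')$ on a pair of runs are Boolean combinations of $\omega$-regular conditions and hence themselves $\omega$-regular, so the combined winning condition is $\omega$-regular; but one has to verify that the standard device for the existential clause (letting Player~2 commit in the game to the witnessing behaviour) still functions in the quantitative setting, which is the substance of the ``small adjustments''. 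Second, Theorem~\ref{theo:main} is stated for prefix-independent winning conditions, so before applying it one should present the condition of $G(n_0,n_1)$ in prefix-independent form, e.g.\ by taking the product of the arena with deterministic parity automata for the payoff sets and using the resulting parity condition, which is prefix-independent; this product leaves the counter structure intact and affects the bound $N$ only by a polynomial factor.
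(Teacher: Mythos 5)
Your proposal follows essentially the same route as the paper: apply the (adjusted to $\omega$-regular payoffs) reduction of \cite[Lemma 39]{ba-arxiv} to the two parameterized automata, read chain dominance as ``$\forall n_0\,\exists n_1$ Player 1 wins the resulting two-parameter parameterized game'', and then bound the quantifiers via Theorem \ref{theo:main} to reduce to finitely many ordinary $\omega$-regular games. The only small inaccuracy is your claim that passing to a prefix-independent parity condition via a product with deterministic parity automata changes the bound $N$ only polynomially --- the product multiplies the number of counter-access vertices, so the bound $2^{|V_c^j|}$ grows more than that --- but it stays finite, so the decidability conclusion is untouched.
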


Similarly, we can start with \cite[Lemma 34]{BrenguierPRS17} to obtain:

\begin{corollary}
Given a boolean imperfect information game, and two uniform chains of strategies, it is decidable whether the first is dominated by the second (w.r.t.~arbitrary strategies used by the opponent).
\end{corollary}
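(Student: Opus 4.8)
The plan is to mirror the proof of the previous corollary, replacing the $\omega$-regular dominance gadget by its imperfect-information analogue. The work of \cite{BrenguierPRS17} — specifically the game construction underlying \cite[Section 7]{BrenguierPRS17} and \cite[Lemma 34]{BrenguierPRS17} — shows that, given a boolean imperfect-information game $\mathcal{G}$ together with two observation-based strategies $\sigma, \sigma'$ for Player~1 presented by finite automata, one can build a finite perfect-information game $H(\mathcal{G},\sigma,\sigma')$ with a boolean $\omega$-regular winning condition such that the designated player wins $H(\mathcal{G},\sigma,\sigma')$ iff $\sigma$ is dominated by $\sigma'$ (against arbitrary Player~2 strategies). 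I would first recall this construction and record two features of it that I need: it is \emph{uniform} in the two strategy automata, in that these automata are plugged in as subcomponents and their transitions copied verbatim into the arena; and its objective can be taken prefix-independent, since a reachability objective is put in the absorbing-leaf form already used in this paper, where "reach the target" literally means "the run ends in a winning leaf", which is prefix-independent.

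Second, I would use the fact that a uniform chain is by definition realised by a parameterized automaton: it is a family $(\sigma_n)_{n\in\N}$ of observation-based strategies given by a single automaton equipped with counters that can only be decremented and tested against $0$. Feeding the parameterized automata for the two chains $C=(\sigma_n)_n$ and $D=(\sigma'_n)_n$ into the construction $H$, and exploiting its uniformity, turns the counter-access states of the automata into counter-access vertices of the arena. This yields a parameterized game $H(\mathcal{G},C,D)$ with two counters $n_0,n_1$ such that for every fixed $(n_0,n_1)$ the designated player wins the instantiation $H(\mathcal{G},C,D)(n_0,n_1)$ iff $\sigma_{n_0}$ is dominated by $\sigma'_{n_1}$ (and likewise for the weak-dominance and strictness predicates, should the definition of domination between chains split into several game-checkable predicates).

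Third, I would unwind the statement "the chain $C$ is dominated by the chain $D$" into a first-order sentence over $\N$ whose atomic predicates are precisely the "the designated player wins $H(\mathcal{G},C,D)(n_0,n_1)$" predicates above; as recorded in the discussion preceding the previous corollary, this sentence has the shape $\forall n_0\,\exists n_1\ \phi(H(\mathcal{G},C,D)(n_0,n_1))$, or at worst a fixed, slightly richer quantifier prefix if several predicates are involved. By the generic application of Theorem~\ref{theo:main} spelled out in Section~\ref{sec:application-generic}, this sentence is equivalent to its bounded version $\forall n_0\le N\,\exists n_1\le N\ \phi(H(\mathcal{G},C,D)(n_0,n_1))$, where $N$ is at most exponential in the size of $H(\mathcal{G},C,D)$. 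The bounded version is a finite boolean combination of queries of the form "does the designated player win the finite game $H(\mathcal{G},C,D)(n_0,n_1)$", each decidable because deciding the winner of a finite game with a boolean $\omega$-regular objective is decidable; hence the whole sentence is decidable.

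The main obstacle I expect is not the quantifier manipulation — Section~\ref{sec:application-generic} already covers arbitrary quantifier prefixes over parameterized games with prefix-independent $\omega$-regular objectives — but verifying that the \cite{BrenguierPRS17} construction is uniform enough to be applied to parameterized automata rather than to plain automata, so that the result is honestly a parameterized game in the sense of this paper (in particular that no player ever "decides" at a counter-access vertex, the move being forced by the current counter value), and confirming that domination between chains decomposes into finitely many such game-checkable predicates arranged under a fixed quantifier prefix. These are exactly the places where one must look into \cite{BrenguierPRS17} rather than treat it as a black box; in view of the small adjustments already needed for the preceding corollary, I expect them to go through with only routine bookkeeping.
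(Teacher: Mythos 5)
Your proposal matches the paper's argument: the paper likewise plugs the dominance-game construction of \cite[Lemma 34]{BrenguierPRS17} into the generic scheme of Section \ref{sec:application-generic}, obtaining a two-parameter parameterized game whose $\forall n_0\,\exists n_1$ winning statement is bounded via Theorem \ref{theo:main} and then decided instance by instance. The paper only sketches this (the corollary is stated with essentially a one-line justification), so your more explicit treatment of uniformity and prefix-independence is the same route spelled out in greater detail.
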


\section{Conclusion}
We have introduced the formalism of parameterized games, which lets us reason about parameterized automata as introduced in \cite{ba-arxiv}. These notions define uniform families of various types of \emph{regular objects} (such as languages, strategies, etc) in such a way that quantifying in an arbitrary way over the parameter preserves decidability of properties. We discussed how to apply this to chains of strategies in the context of admissibility and dominance between strategies, but other potential applications seem likely.

The proof of our main result involved the notion of graph game forms, and the monotone boolean functions defined by them. Several open question remain (Question \ref{question:consiseness} and Question \ref{question:rn}) regarding the conciseness of the framework, and regarding how quickly functions defined such converge to their least fixed point.

\section*{Acknowledgements}
I am grateful to Isma\"el Jecker and Marie Van den Bogaard for discussion leading to the inception of this work. I received helpful comments from the anonymous referees, Anuj Dawar and Mickael Randour.

\bibliographystyle{eptcs}
\bibliography{biblio}

\newpage
\appendix
\section*{Additional remarks}

The following is a straight-forward adaption of the familiar construction of an undetermined game to prefix-independent winning conditions and games played on finite graphs.

\begin{proposition}
\label{prop:undetermined}
There exists a (two-player win/lose perfect information) game played on a finite graph with prefix-independent winning condition, such that neither player has a winning strategy.
\begin{proof}
The arena looks as follows: The game starts in $V_s$, controlled by Player 1 with outgoing edges to $V_{a,0}$ and $V_{a,1}$. Both of these are controlled by Player 2 and have outgoing edges to $V_{b,0}$ and $V_{b,1}$. The latter two have only a single outgoing edge leading to $V_s$. The important property of this arena is that for any fixed strategy of either player, there are still continuumsly many runs compatible with it.

Next, we consider the quotient of $V^\omega$ by the equivalence relation $E$ where $(p,q) \in E$ iff $\exists n, m \ p_{\geq n} = q_{\geq m}$. Note that each equivalence class from $E$ is countable, $V^\omega/E$ has again cardinality continuum. The same holds for any subset of $V^\omega$ of size continuum.

Let $\Omega$ be the least ordinal of size continuum, and let $(\sigma_\alpha)_{\alpha < \Omega}$ and $(\tau_\beta)_{\tau < \Omega}$ be well-orderings of the strategies of Player 1 and 2 respectively.

We now construct the winning set $C$ and its complement $V^\omega \setminus C$ in stages $\gamma < \Omega$, i.e.~we built sets $(C_\gamma)_{\gamma < \Omega}$ and $(D_\gamma)_{\gamma < \Omega}$ such that $C_\gamma \subseteq C_{\gamma'}$ and $D_\gamma \subseteq D_{\gamma'}$ for $\gamma < \gamma'$ such that $C = \bigcup_{\gamma < \Omega} C_\gamma$, $V^\omega \setminus C \supseteq (D_\gamma)_{\gamma < \Omega}$, and $|C_\gamma| < 2^{\aleph_0}$, $|D_\gamma| < 2^{\aleph_0}$.

We start with $C_0 = D_0 = \emptyset$. In stage $\gamma$, pick the least $\beta$ such that $\Run(\sigma_\gamma,\tau_\beta) \notin \bigcup_{\gamma' < \gamma} C_\gamma'$. This must exist, since the set $\{\Run(\sigma_\gamma,\tau_\beta) \mid \beta < \Omega\}$ has cardinality continuum, whereas $\bigcup_{\gamma' < \gamma} C_\gamma'$ is smaller. Now let $D_\gamma$ be the union of the $E$-equivalence class of $\Run(\sigma_\gamma,\tau_\beta)$ and $\bigcup_{\gamma' < \gamma} D_\gamma$. Next, pick the smallest $\alpha$ such that $\Run(\sigma_\alpha,\tau_\gamma) \notin \bigcup_{\gamma' \leq \gamma} D_\gamma'$, which exists as before. Let $C_\gamma$ be the union of the $E$-equivalence class of $\Run(\sigma_\alpha,\tau_\gamma)$ and $\bigcup_{\gamma' < \gamma} C_\gamma$.

As we have designed the winning condition to be $E$-invariant, it is prefix-independent. For every strategy of either player, we ensured that there exists a strategy of their opponent to beat it, hence no player can have a winning condition.
\end{proof}
\end{proposition}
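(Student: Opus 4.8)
The plan is to adapt the textbook Gale--Stewart construction of an undetermined game, making two changes: the arena must be a fixed finite graph, and the winning condition must come out prefix-independent. For the arena I would take a single ``wheel'': a Player~1 vertex $V_s$ with two outgoing edges to Player~2 vertices $V_{a,0}, V_{a,1}$, each of which has two outgoing edges to router vertices $V_{b,0}, V_{b,1}$, each of which has a single edge back to $V_s$. A play then has the shape $V_s V_{a,i_1} V_{b,j_1} V_s V_{a,i_2} V_{b,j_2}\cdots$, where $i_k$ records Player~1's $k$-th choice and $j_k$ records Player~2's $k$-th choice, so that fixing a strategy of either player still leaves continuum-many compatible plays (the other player's bit-sequence is unconstrained). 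Thus for every strategy $\sigma$ of Player~1 the set $\{\Run(\sigma,\tau) : \tau\}$ has cardinality $2^{\aleph_0}$, and symmetrically for Player~2.

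To obtain prefix-independence I would only ever add whole equivalence classes of the tail relation $E$ on $V^\omega$ given by $(p,q)\in E$ iff $p_{\ge n}=q_{\ge m}$ for some $n,m$; a set is $E$-invariant precisely when it is prefix-independent, and every $E$-class is countable, hence a union of fewer than $2^{\aleph_0}$ many $E$-classes omits continuum-many $E$-classes. Let $\Omega$ be the least ordinal of cardinality $2^{\aleph_0}$ and fix enumerations $(\sigma_\alpha)_{\alpha<\Omega}$, $(\tau_\beta)_{\beta<\Omega}$ of the two players' strategies. I would build increasing sequences $(C_\gamma)$, $(D_\gamma)$ of pairwise-disjoint sets, each a union of $<2^{\aleph_0}$ many $E$-classes, as follows. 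At stage $\gamma$, because $\{\Run(\sigma_\gamma,\tau) : \tau\}$ meets $2^{\aleph_0}$ many $E$-classes while $\bigcup_{\gamma'<\gamma}C_{\gamma'}$ hits fewer, pick $\tau$ whose run's $E$-class avoids $\bigcup_{\gamma'<\gamma}C_{\gamma'}$ and add that class to obtain $D_\gamma$; then, likewise, pick $\sigma$ whose run against $\tau_\gamma$ has $E$-class avoiding $\bigcup_{\gamma'\le\gamma}D_{\gamma'}$ and add it to obtain $C_\gamma$. Set $C=\bigcup_{\gamma<\Omega}C_\gamma$. The order of the two picks at each stage, together with the monotonicity of the sequences, guarantees $C\cap\bigcup_\gamma D_\gamma=\emptyset$, so $C$ is $E$-invariant, hence prefix-independent; for each $\gamma$ the $\tau$ chosen at stage $\gamma$ beats $\sigma_\gamma$ (its run lies in $D_\gamma\subseteq V^\omega\setminus C$) and the $\sigma$ chosen at stage $\gamma$ beats $\tau_\gamma$ (its run lies in $C_\gamma\subseteq C$), so neither player has a winning strategy.

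I expect the main obstacle to be purely the bookkeeping that underpins this: one must check that each stage enlarges $C$ and $D$ by only countably many runs (so that after $<\Omega$ stages continuum-many $E$-classes still remain available for the next pick), that the two picks at a stage can be kept disjoint from everything chosen so far (this is where the order ``$D_\gamma$ first, then $C_\gamma$'' and the choice of bounds $\bigcup_{\gamma'<\gamma}C_{\gamma'}$ versus $\bigcup_{\gamma'\le\gamma}D_{\gamma'}$ matters), and that the specific finite arena really does provide continuum-many plays compatible with \emph{every} strategy of \emph{either} player --- which is exactly why both players must keep making genuine, separately-recorded binary choices forever. Everything past these points (the implication $E$-invariance $\Rightarrow$ prefix-independence and the final ``no winning strategy'' conclusion) is immediate.
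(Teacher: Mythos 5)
Your proposal is correct and follows essentially the same route as the paper's own proof: the same four-vertex ``wheel'' arena guaranteeing continuum-many plays compatible with any single strategy, the same tail-equivalence relation $E$ to enforce prefix-independence, and the same transfinite diagonalization over an enumeration of both players' strategies, with the same stage ordering ($D_\gamma$ against $\sigma_\gamma$ first, then $C_\gamma$ against $\tau_\gamma$) and cardinality bookkeeping. No substantive difference to report.
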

\end{document}